\definecolor{lgray}{gray}{0.95}
\definecolor{lblue}{rgb}{0.90,0.90,1.00}
\definecolor{lyellow}{rgb}{1.00,1.00,0.70}
\newtheorem{prop}{Proposition}
\newtheorem{ex}{Example}
\newenvironment{codex}{\small\verbatim}{\endverbatim\normalsize}
\newcommand{\BI}[0]{\begin{itemize}}
\newcommand{\EI}[0]{\end{itemize}}
\newcommand{\I}[0]{\item}
\newcommand{\BE}[0]{\begin{enumerate}}
\newcommand{\EE}[0]{\end{enumerate}}
\newcommand{\BX}[0]{\begin{ex}}
\newcommand{\EX}[0]{\end{ex}}
\newcommand{\BF}[0]{\begin{filecontents*}{data.csv}}
\newcommand{\LPIC}[4]{
\begin{figure}[htbp]
\begin{center}
\begin{tikzpicture}
\begin{semilogyaxis}[xlabel={#1},ylabel={#2} (log. scale)]
\addplot table [col sep=comma] {data.csv};
\end{semilogyaxis}
\end{tikzpicture}
\end{center}
\caption{#3}
\label{#4}
\end{figure}
}  
\newcommand{\LPICS}[6]{
\begin{figure}[htbp]
\begin{center}
\begin{tikzpicture}
\begin{semilogyaxis}[xlabel={#1},ylabel={#2} (log. scale),
		legend entries={#3,#4},
		legend style={
      at={(0.50,1.03)},
			anchor=south
		}
]
\addplot table [x=a,y=b,col sep=comma] {data.csv};
\addplot table [x=a,y=c,col sep=comma] {data.csv};
\end{semilogyaxis}
\end{tikzpicture}
\end{center}
\caption{#5}
\label{#6}
\end{figure}
}
\def \bscale1 {0.25}
\def \bscale {0.25}
\begin{document}

\title{ 
On Uniquely Closable and Uniquely Typable Skeletons of Lambda Terms 
}

\author{Olivier Bodini\inst{1} \and Paul Tarau\inst{2}}

\institute{
   {Laboratoire d'Informatique de Paris-Nord}\\
   {UMR CNRS 7030}\\
   {\em olivier.bodini@lipn.univ-paris13.fr}
\and
   {Department of Computer Science and Engineering}\\
   {University of North Texas}\\
   {\em paul.tarau@unt.edu}
   }

\maketitle

\begin{abstract}
Uniquely closable skeletons of lambda terms are Motzkin-trees that
predetermine the unique closed lambda term that can be obtained by labeling their leaves with
 de Bruijn indices. Likewise, uniquely typable skeletons of closed lambda terms
predetermine the unique simply-typed lambda term that can be obtained 
by labeling their leaves with de Bruijn indices.

We derive, through a sequence of logic program transformations, efficient
code for their combinatorial generation and study their statistical properties.

As a result, we obtain context-free grammars describing closable and uniquely closable skeletons
of lambda terms, opening the door for their in-depth study with tools from analytic combinatorics.

Our empirical study of the more difficult case of (uniquely) typable terms reveals some
interesting open problems about their density and asymptotic behavior. 

As a connection between the two classes of terms, we also show that uniquely typable closed lambda term skeletons of size $3n+1$ are  in a bijection with binary trees of size $n$.

{\em {\bf Keywords:}
deriving  efficient logic programs,
logic programming and computational mathematics,
combinatorics of lambda terms,
inferring simple types,
uniquely closable lambda term skeletons,
uniquely typable lambda term skeletons.
}
\end{abstract}

\section{Introduction}

The study of combinatorial properties of lambda terms has theoretical ramifications
ranging from their connection to proofs in intuitionistic logic 
via the Curry-Howard correspondence \cite{howard:formulaeastypes:hbc:80} 
and their role as a foundation of
Turing-complete as well as expressive but terminating computations in the case of simply typed lambda terms \cite{bar93}. At the same time, lambda terms are
used in the internal representations of
compilers for functional programming languages and proof assistants,
for which the generation of large random lambda terms helps
with automated testing \cite{palka11}. 

This paper focuses on
binary-unary trees that are obtained from lambda terms
in de Bruijn form, represented as trees, 
by erasing the de Bruijn indices labeling variables at their leaves.
Such ``skeletons'' of the lambda terms turn out to predetermine some
non-trivial properties the lambda terms they host, e.g.,
if such terms are closed or simply-typed.
Of particular interest are the cases when unique such terms
exist.

Our declarative meta-language is Prolog, which turns out to provide
everything we need: easy combinatorial generation via backtracking
over the set of all answers, specified as a Definite Clause Grammar (DCG)
that enforces size constraints and allows  placing more complex constraints
at points in the code where they ensure the earliest possible pruning
of the search space.  

Our meta-language also facilitates program transformations
that allow us to derive step-by-step faster programs as well as simpler
expressions of the underlying combinatorial mechanisms, e.g., a context-free
grammar in the case of uniquely closable skeletons, that in turn makes them
amenable to study with powerful techniques from analytical combinatorics.

The paper is organized as follows.
Section \ref{ct} describes generators for closed lambda terms and their Motzkin-tree skeletons.
Section \ref{cs} introduces closable skeletons and studies their statistical properties.
Section \ref{uc} derives  algorithms (including a CF-grammar) for efficient generation of uniquely closable skeletons.
Section \ref{utnt} discusses typable and untypable closed skeletons.
Section \ref{ucut} introduces uniquely typable closed skeletons, studies the special case of uniquely closable and uniquely typable skeletons and establishes 
their connection to members of the Catalan family of combinatorial objects.
Section \ref{rels} overviews related work and section
\ref{concl} concludes the paper.

\begin{codeh}
:-include('lpgen/stats.pro').
\end{codeh}

The paper is structured as a literate Prolog program to facilitate an easily replicable, concise and declarative expression of our concepts and algorithms. 

The code extracted from the paper, tested with SWI-Prolog \cite{swi} version {\tt 7.5.3}, is available at: 
\url{http://www.cse.unt.edu/~tarau/research/2017/uct.pro} .

\section{Closed Lambda Terms and their Motzkin-tree Skeletons} \label{ct}

A {\em Motzkin tree} (also called binary-unary tree) is a rooted ordered tree
built from binary nodes, unary nodes and leaf nodes.
Thus the set of  Motzkin trees can be seen as
the free algebra generated by 
the constructors {\tt v/0}, {\tt l/1} and {\tt a/2}.

We define lambda terms in de Bruijn form \cite{dbruijn72} 
as the free algebra generated by 
the constructors {\tt l/1}, and {\tt a/2}
and leaves labeled with natural numbers wrapped
with the constructor {\tt v/1}.

A lambda term in de Bruijn form 
is {\em closed} if for each
of its de Bruijn indices it exists
a lambda binder to which it points,
on the path to the root of the tree
representing the term. They are counted by sequence {\bf A135501} in \cite{intseq}.

The predicate {\tt closedTerm/2} specifies an all-terms generator, which, given a natural number {\tt N}
backtracks one member {\tt X} at a time, over the set of terms of size {\tt N}.
\begin{code}
closedTerm(N,X):-closedTerm(X,0,N,0).

closedTerm(v(I),V)-->{V>0,V1 is V-1,between(0,V1,I)}.
closedTerm(l(A),V)-->l,{succ(V,NewV)},closedTerm(A,NewV).  
closedTerm(a(A,B),V)-->a,closedTerm(A,V),closedTerm(B,V).
\end{code}
The {\em size definition} is expressed by the work of the 
predicates {\tt l/1}, consuming {\em 1} size unit for
each {\em lambda binder} and {\tt a/2} consuming {\em 2} size units for each {\tt a/2} {\em application}
constructor and {\em 0} units for variables {\tt v/1}. 
The initial term which is just a unique variable has size {\em 0}.

Given that the number of leaves in a Motzkin tree is the number of binary nodes + {\em 1}, if follows that:
\begin{prop}
The set of terms of size $n$ for the size definition \{{\em application=2}, {\em lambda=1}, {\em variable=0}\}  is equal to the set of terms of size $n+1$ for
the size definition \{{\em application=1}, {\em lambda=1}, {\em variable=1}\}. 
\end{prop}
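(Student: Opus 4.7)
The plan is to observe that both size definitions apply to the very same underlying set of trees (Motzkin trees, or more precisely the de Bruijn terms built over them), so the claimed equality is really just a matching of two size functions on a common domain. I would therefore treat this as an identity map and show that it shifts size by exactly one.

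First I would fix a term $t$ built from the constructors \texttt{v}, \texttt{l}, \texttt{a}, and let $a(t)$, $\ell(t)$, and $v(t)$ denote its numbers of application nodes, lambda nodes, and variable leaves, respectively. Writing $s_1(t)$ and $s_2(t)$ for the two sizes, we have $s_1(t) = 2a(t) + \ell(t)$ and $s_2(t) = a(t) + \ell(t) + v(t)$. The key ingredient is the hint already provided in the excerpt, namely the standard fact that in any binary-unary tree the number of leaves equals the number of binary nodes plus one, i.e.\ $v(t) = a(t) + 1$. A one-line induction on the structure of $t$ (base case: a single variable has $v=1$, $a=0$; unary case: $\ell$ grows but $v$ and $a$ do not; binary case: both $a$ and $v$ grow by one) verifies this invariant.

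Substituting $v(t) = a(t) + 1$ into $s_2(t)$ gives $s_2(t) = 2a(t) + \ell(t) + 1 = s_1(t) + 1$. Hence the identity map on terms is a bijection between $\{t : s_1(t) = n\}$ and $\{t : s_2(t) = n+1\}$, establishing the proposition.

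I do not anticipate any real obstacle here: the only nontrivial step is the leaf/binary-node identity, and that is a textbook lemma about Motzkin-like trees that follows by a routine structural induction. The remainder is arithmetic.
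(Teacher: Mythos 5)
Your proof is correct and matches the paper's own argument, which likewise derives the result from the fact that the number of leaves in a Motzkin tree equals the number of binary nodes plus one (stated there as the one-line preface ``Given that the number of leaves in a Motzkin tree is the number of binary nodes + 1, it follows that\ldots''). You have simply made explicit the bookkeeping $s_2(t) = 2a(t) + \ell(t) + 1 = s_1(t) + 1$ that the paper leaves to the reader.
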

Thus our size definition gives the sequence {\bf A135501} of counts,
first introduced in \cite{lescanne13}, shifted by one.
For instance, the term $l(a(v(0),v(0)))$ will have size $3=1+2$ with our definition, which corresponds to
size $4=1+1+1+1$ using the size definition of \cite{lescanne13}.

Our size definition is implemented as
\begin{code}
l(SX,X):-succ(X,SX).
a-->l,l.
\end{code}
with  Prolog's DCG notation controlling the consumption of size units for {\tt N} to {\tt 0}.

The predicate {\tt toMotSkel/2} computes the Motzkin skeleton of a term. 
\begin{code}
toMotSkel(v(_),v).
toMotSkel(l(X),l(Y)):-toMotSkel(X,Y).
toMotSkel(a(X,Y),a(A,B)):-toMotSkel(X,A),toMotSkel(Y,B).
\end{code}
The predicate {\tt motSkel/2} generates Motzkin trees {\tt X} of size {\tt N}, using the same
size definition as the lambda terms for which they serve as skeletons.
\begin{code}
motSkel(N,X):-motSkel(X,N,0).

motSkel(v)-->[].
motSkel(l(X))-->l,motSkel(X).
motSkel(a(X,Y))-->a,motSkel(X),motSkel(Y).
\end{code}

\section{Closable and Unclosable Skeletons} \label{cs}

We call a Motzkin tree {\em closable}
if it is the skeleton of at least one closed lambda term.

The predicate {\tt isClosable/1} tests if it exists a closed lambda term
having {\tt X} as its skeleton. For each lambda binder it increments
a count {\tt V} (starting at {\tt 0}), and ensures that
it is strictly positive for all leaf nodes.
\begin{code}
isClosable(X):-isClosable(X,0).
  
isClosable(v,V):-V>0.
isClosable(l(A),V):-succ(V,NewV),isClosable(A,NewV).  
isClosable(a(A,B),V):-isClosable(A,V),isClosable(B,V).
\end{code}
We define generators for closable and unclosable skeletons
by filtering the stream of answers of the Motzkin tree generator
{\tt motSkel/2} with the predicate {\tt isClosable/1} and its negation.
\begin{code}
closableSkel(N,X):-motSkel(N,X),isClosable(X).

unClosableSkel(N,X):-motSkel(N,X),not(isClosable(X)).
\end{code}

It immediately follows that:
\begin{prop}\label{closkel}
A Motzkin tree is a skeleton of a closed lambda term if and only if
it exists at least one lambda binder on each path from the leaf to the root.
\end{prop}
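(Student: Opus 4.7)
The plan is to prove both directions directly from the definition of closedness for terms in de Bruijn form, and to observe that the claim is essentially a semantic restatement of what the predicate \texttt{isClosable/2} computes operationally.

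For the ``only if'' direction, suppose the Motzkin tree $T$ is the skeleton of some closed lambda term $M$. Fix an arbitrary leaf $\ell$ of $T$; by construction of \texttt{toMotSkel/2}, $\ell$ corresponds in $M$ to a variable $v(i)$ for some de Bruijn index $i \geq 0$. Since $M$ is closed, the definition given at the start of Section~\ref{ct} requires this index to point to a lambda binder somewhere on the path from $\ell$ to the root of $M$. The binders of $M$ lie at exactly the same positions as the unary nodes of $T$, so that path in $T$ contains at least one such node.

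For the ``if'' direction, I would produce an explicit witness. Suppose every root-to-leaf path of $T$ contains at least one unary ($l/1$) node. Define $M$ by replacing every leaf $v$ of $T$ with $v(0)$, keeping the $l/1$ and $a/2$ structure unchanged. Then \texttt{toMotSkel(M, T)} holds, so $T$ is indeed the skeleton of $M$. Moreover, each variable $v(0)$ in $M$ has at least one enclosing binder on its path to the root (by hypothesis on $T$), and the index $0$ by the de Bruijn convention refers to the innermost such binder; hence $v(0)$ is bound, so $M$ is closed.

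As a remark closing the proof, the two arguments match exactly the operational behavior of \texttt{isClosable/2}: the accumulator $V$ tracks the number of binders traversed so far on the current root-to-leaf path, and the leaf clause asserts $V>0$, which is precisely the condition stated in the proposition. I do not expect any serious obstacle; the one point to state carefully is the de Bruijn convention (index $i$ refers to the $(i{+}1)$-st enclosing binder), which is what makes $v(0)$ a universally valid label whenever a binder exists above the leaf.
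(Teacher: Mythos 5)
Your proof is correct and takes essentially the same route as the paper, which offers no explicit proof at all: it asserts the proposition ``immediately follows'' from the closedness condition encoded operationally in \texttt{isClosable/2} (a binder count $V$ accumulated along each root-to-leaf path that must satisfy $V>0$ at every leaf), which is exactly the reading your closing remark makes explicit. The one thing you add beyond the paper's implicit argument is the explicit witness for the ``if'' direction --- labeling every leaf with \texttt{v(0)}, bound to the innermost enclosing binder --- which is the natural way to discharge that direction and entirely consistent with the paper's reasoning.
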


In Fig. \ref{closing} we show 3 closable and 3 unclosable Motzkin skeletons.
\begin{figure}
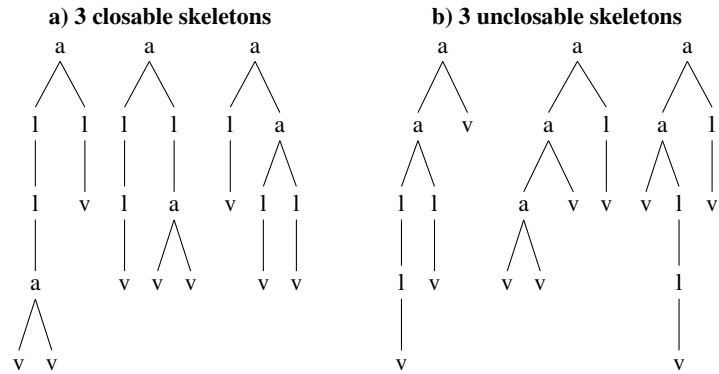

\begin{center}
{\bf a) 3 closable skeletons} ~~~~~~~~~~~~~~~~~~~~~~~~~ {\bf b) 3 unclosable skeletons}\\

\Tree [.a [.l [.l [.a [.v ] [.v ]  ]  ]  ] [.l [.v ]  ]  ]~
\Tree [.a [.l [.l [.v ]  ]  ] [.l [.a [.v ] [.v ]  ]  ]  ]~
\Tree [.a [.l [.v ]  ] [.a [.l [.v ]  ] [.l [.v ]  ]  ]  ]~~~~~~~~~~~~
\Tree [.a [.a [.l [.l [.v ]  ]  ] [.l [.v ]  ]  ] [.v ]  ]~
\Tree [.a [.a [.a [.v ] [.v ]  ] [.v ]  ] [.l [.v ]  ]  ]~
\Tree [.a [.a [.v ] [.l [.l [.v ]  ]  ]  ] [.l [.v ]  ]  ]

\end{center}
\caption{Closable vs. unclosable skeletons of size {\tt 7}}\label{closing}
\end{figure}

Next, we  derive the predicate {\tt quickClosableSkel/2} that
generates closable skeletons about 3 times faster
by testing directly that lambda binders are available at 
each leaf node, resulting in earlier
pruning of those that do not satisfy this constraint.
\begin{code}
quickClosableSkel(N,X):-quickClosableSkel(X,0,N,0).

quickClosableSkel(v,V)-->{V>0}.
quickClosableSkel(l(A),V)-->l,{succ(V,NewV)},quickClosableSkel(A,NewV).  
quickClosableSkel(a(A,B),V)-->a,
  quickClosableSkel(A,V),
  quickClosableSkel(B,V).
\end{code}

We observe that there are slightly more unclosable Motzkin trees than closable ones as size grows:
\vskip 0.5cm
\noindent
{\small {\em
closable:~~ 
0,1,1,2,5,11,26,65,163,417,1086,2858,7599,20391,55127,150028,410719, ...

\noindent
unclosable:
1,0,1,2,4,10,25,62,160,418,1102,2940,7912,21444,58507,160544,442748, ...
}}
\vskip 0.5cm

One step further, we can derive, based on Proposition \ref{closkel}, a
grammar generating closable skeletons, by observing that they require at least
one lambda ({\tt l/1} constructor) on each path, with Motzkin trees below the {\tt l/1}
constructor generated by the predicate {\tt motSkel/3} introduced in section \ref{ct}.
This runs about 3 times as fast as {\tt closableSkel/2}.

\begin{code}
closable(N,X):-closable(X,N,0).

closable(l(Z))-->l,motSkel(Z).
closable(a(X,Y))-->a,closable(X),closable(Y).
\end{code}

By entering this grammar as input to Maciej Bendkowski's Boltzmann sampler generator \cite{bend2017} 
we have obtained a Haskell program generating uniformly random closable skeletons
of one hundred thousand nodes in a few seconds. The probability to pick 
 {\tt l/1} and enter a Motzkin subtree instead of an {\tt a/2} constructor was {\tt 0.8730398709632761}.
 The  threshold within the Motzkin subtree to pick a leaf was {\tt 0.3341408333975344}, then
  {\tt 0.667473848839429} for a unary constructor, over which a binary constructor was picked. 
  See the {\bf Appendix} for the equivalent Prolog code.

Let us denote by $M(z)=\sum m_nz^n$ the ordinary generating function for Motzkin trees ($m_n$ is the number of Motzkin trees of size $n$). It is well known 
\cite{flajolet09} that $M(z)$ follows the algebraic functional equation
$M=z+zM+zM^2$ which can be obtained directly from the symbolic method and we get $M(z)={\frac {1-z-\sqrt {-3\,{z}^{2}-2\,z+1}}{2z}}$. From this, we obtain the classical result that asymptotically the number $m_n$ of Motzkin trees of size $n$ is equivalent to $\dfrac{\sqrt{3}}{2\sqrt{\pi}} 3^n n^{-3/2}$.

Now, following the proposition 1 (and the predicate {\tt closable/2} providing the corresponding grammar definition), we can deduce that the ordinary generating function $C(z)$ for closable lambda terms satisfies  $C(z)=zC(z)^2+zM(z)$. Indeed, a closable term has either an application at the root followed by two sub-closable terms (which gives rise to $zC(z)^2$), either an abstraction at the root followed by a term (which gives rise to $zM(z)$). Consequently, $C(z)={\frac {1-\sqrt {2\,z\sqrt {-3\,{z}^{2}-2\,z+1}+2\,{z}^{2}-2\,z+1
}}{2z}}$. Now, we are in the framework of the Flajolet-Odlysko transfer theorems \cite{flajolet1990singularity} which gives directly the asymptotics of the number $c_n$ of closable skeletons: $c_n \sim \dfrac{\sqrt{15}}{10\sqrt{\pi}} 3^n n^{-3/2}$. By dividing $c_n$ with  $m_n$ we obtain:

\begin{prop}
When $n$ tends to the infinity, the proportion of closable lambda term skeletons tends to $\dfrac{1}{\sqrt{5}}\doteq 44.7\%$.
\end{prop}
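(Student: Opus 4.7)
The plan is simply to divide the two asymptotic estimates already in hand and verify that the ratio simplifies to $1/\sqrt{5}$. The preceding paragraph has produced, via singularity analysis applied to $M(z)$ and $C(z)$ respectively,
\[
m_n \sim \tfrac{\sqrt{3}}{2\sqrt{\pi}}\, 3^{n}\, n^{-3/2}, \qquad c_n \sim \tfrac{\sqrt{15}}{10\sqrt{\pi}}\, 3^{n}\, n^{-3/2}.
\]
Both estimates share the same exponential factor $3^{n}$ (a common dominant singularity $\rho = 1/3$) and the same subexponential factor $n^{-3/2}$, characteristic of a square-root singularity. Only the leading constants differ, so $c_n/m_n$ tends to a constant, and a one-line simplification
\[
\frac{c_n}{m_n} \;\longrightarrow\; \frac{\sqrt{15}/(10\sqrt{\pi})}{\sqrt{3}/(2\sqrt{\pi})} \;=\; \frac{\sqrt{15}}{5\sqrt{3}} \;=\; \frac{\sqrt{5}}{5} \;=\; \frac{1}{\sqrt{5}}
\]
yields the claimed limit.

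The one step I would double-check before invoking this is that $C(z)$ and $M(z)$ really do have the same dominant singularity, i.e., that the outer square root appearing in the closed form of $C(z)$ does not create a spurious closer singularity of branch-point type. A direct substitution at $z = 1/3$ shows that the outer radicand $1 - 2z + 2z^2 + 2z\sqrt{1 - 2z - 3z^2}$ evaluates to $1 - 2/3 + 2/9 + 0 = 5/9 \ne 0$, so the outer root contributes no singularity at $1/3$, and the only nearest singularity of $C$ is the one inherited from the inner square root of $M$, located at $\rho = 1/3$ and of square-root type. This in turn justifies the uniform Flajolet--Odlyzko transfer already used to produce the asymptotic estimate for $c_n$.

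The main obstacle is therefore not in the proposition itself, which collapses to the arithmetic above, but in having arrived at the clean recursion $C(z) = zC(z)^2 + zM(z)$ in the first place (which rests on Proposition~\ref{closkel}) and in the singular-expansion check of the previous paragraph. Granted those, the proof reduces entirely to the one-line ratio computation, so I would write the argument as: quote the two asymptotic equivalents, take the quotient, and simplify.
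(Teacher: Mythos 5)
Your proposal is correct and matches the paper's own argument exactly: the paper likewise obtains $m_n \sim \frac{\sqrt{3}}{2\sqrt{\pi}}3^n n^{-3/2}$ and $c_n \sim \frac{\sqrt{15}}{10\sqrt{\pi}}3^n n^{-3/2}$ via the equation $C(z)=zC(z)^2+zM(z)$ and the Flajolet--Odlyzko transfer, then concludes ``by dividing $c_n$ with $m_n$.'' Your extra verification that the outer radicand equals $5/9 \neq 0$ at $z=1/3$ is a sound (and welcome) check of a step the paper leaves implicit when invoking the transfer theorem, not a departure from its method.
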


It is possible to calculate very efficiently the coefficients $c_n$. For that purpose, from the equation $C(z)=zC(z)^2+zM(z)$, an easy calculation gives that $C(z)$ satisfies the algebraic equation $z^2C(z)^4-2zC(z)^3+(-z^2+z+1)C(z)^2+(z-1)C(z)+z^2$. Thus, dealing with classical tools (in order to pass from an algebraic equation into a holonomic one), we can deduce a linear differential equation from it:
\begin{eqnarray*}
&0=-208{z}^{6}-168{z}^{5}+12{z}^{4}+94{z}^{3}-42{z}^{2
}+6z +\\ 
&\left( -16{z}^{6}+24{z}^{5}+36{z}^{4}-92{z}^{3}+60{z
}^{2}-12z \right) C \left( z \right) +\\
&\left( 768{z}^{9}-480{z}^
{8}-1088{z}^{7}-64{z}^{6}+216{z}^{5}+44{z}^{4}+30{z}^{3}-54
{z}^{2}+18z-2 \right) {\frac {\rm d}{{\rm d}z}}C \left( z \right) +\\
&\left( 384\,{z}^{10}-32\,{z}^{9}-368\,{z}^{8}-56\,{z}^{7}-4\,{z}^{6}
+110\,{z}^{5}-21\,{z}^{4}-21\,{z}^{3}+9\,{z}^{2}-z \right) {\frac {
{\rm d}^{2}}{{\rm d}{z}^{2}}}C \left( z \right)
\end{eqnarray*}
with the initial condition $C \left( 0 \right) =0.$
Now, extracting a relation on the coefficients from this holonomic equation, we obtain the following P-recurrence for the coefficient $c_n$: 
\begin{eqnarray*}
 &\left( 384{n}^{2}+384n \right) c_n +\\
 &\left( -32{n}^{2}-512n-480 \right) c_{n+1} +\\
 &\left( -368{n}^{2}-2192 n-2928 \right) c_{n+2} +\\ 
&\left( -56{n}^{2}-344n-504 \right) c_{n+3}+\\ 
&\left( -4{n}^{2}+188n+852 \right) c_{n+4} +\\ 
&\left( 110{n}^{2}+1034n+2328 \right) c_{n+5} +\\
&\left( -21{n}^{2}-201n-390\right) c_{n+6} +\\
&\left( -21{n}^{2}-327n-1272\right) c_{n+7} +\\ 
& \left( 9{n}^{2}+153n+648 \right) c_{n+8} + \\
& \left( -{n}^{2}-19n-90 \right) c_{n+9}=0\\
\end{eqnarray*}
with the initial conditions $c_0=0,
c_1=0,c_2=1,c_3=1,c_4=2,c_5=5,c_6=11,c_7=26,c_8=65 $

Note that by a guess-and-prove approach, we can a little simplify the recurrence into:

\begin{eqnarray*}
 &\left( 1200n^5+18480n^4+90816n^3+161088n^2+87552n \right) c_n +\\
 &\left( 800n^5+13520n^4+79024n^3+202312n^2+231768n+95760 \right) c_{n+1} +\\
 &\left( -100n^5-1840n^4-12848n^3-38792n^2-44100n-9576 \right) c_{n+2} +\\ 
&\left( -100n^5-1990n^4-14648n^3-48254n^2-66276n-23940 \right) c_{n+3}+\\ 
&\left( -225n^5-4815n^4-38883n^3-147519n^2-260286n-167580 \right) c_{n+4} +\\ 
&\left( 150n^5+3435n^4+29817n^3+120441n^2+218739n+131670 \right) c_{n+5} +\\
&\left( -25n^5-610n^4-5642n^3-24128n^2-45405n-26334\right) c_{n+6}=0\\
\end{eqnarray*}
with the initial conditions $c_0=0,
c_1=0,c_2=1,c_3=1,c_4=2,c_5=5$ 

This recurrence is extremely efficient in order to calculate the coefficient $c_n$.

Alternatively, the expansion into Taylor series of $C(z)$ gives $z^2 + z^3 + 2 z^4 + 5 z^5 + 11 z^6 + 26 z^7 + 65 z^8 + 163 z^9 + 417 z^{10} + 1086 z^{11} + 2858 z^{12} + 7599 z^{13} + 20391 z^{14} + 55127 z^{15} ... $ with its coefficients
matching the number of terms of sizes given by the exponents, corresponding to the number of solutions of the  predicate {\tt closableSkel/2}.  

\section{Uniquely Closable Skeletons} \label{uc}

We call a skeleton {\em uniquely closable} if it exists exactly one closed lambda term
having it as its skeleton.

\begin{prop}
A skeleton is uniquely closable if and only if exactly one lambda binder is available above
each of its leaf nodes.
\end{prop}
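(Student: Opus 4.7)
The plan is to count, for each closable skeleton, how many distinct closed lambda terms it hosts, and then read off when that count equals $1$. The key observation is that the labelling of leaves in the generator \texttt{closedTerm/2} is purely local: at a leaf reached under $V$ lambda binders, the de Bruijn index $I$ ranges independently over $\{0,1,\dots,V-1\}$ (via \texttt{between(0,V1,I)}), with no cross-leaf constraints. Hence the choices at different leaves are independent.

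First, I would formalize this by showing that the number of closed lambda terms whose skeleton is a given Motzkin tree $T$ equals
\[
N(T)\;=\;\prod_{\ell \in \mathrm{leaves}(T)} b(\ell),
\]
where $b(\ell)$ is the number of lambda binders on the path from $\ell$ to the root. The proof is a straightforward induction on $T$ following the three clauses of \texttt{closedTerm/2}: at an application, the counts for the two subtrees multiply and the binder depth $V$ is passed unchanged; at a lambda, $V$ increases by one and the count is inherited from the subterm; at a leaf, the count is exactly $V=b(\ell)$.

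From this product formula the equivalence is immediate. If every leaf has exactly one binder above it, then $N(T)=\prod_\ell 1 = 1$, so $T$ is uniquely closable. Conversely, if some leaf has $b(\ell)=0$, then $N(T)=0$ and $T$ is not closable at all (recovering Proposition~\ref{closkel}); and if some leaf has $b(\ell)\ge 2$ while all others have $b(\ell)\ge 1$, then $N(T)\ge 2$, so $T$ hosts more than one closed term and is not uniquely closable. Thus $N(T)=1$ iff $b(\ell)=1$ for every leaf $\ell$.

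I do not expect a serious obstacle: the main conceptual step is simply recognizing that leaf labellings are independent, so counts multiply. The only mild care needed is bookkeeping in the induction to ensure $V$ genuinely tracks the number of binders on the root-to-leaf path, which is clear from the fact that \texttt{closedTerm/2} increments $V$ exactly once per \texttt{l/1} node and threads it unchanged through \texttt{a/2} nodes.
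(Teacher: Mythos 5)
Your proof is correct, and it rests on the same core observation as the paper's: de Bruijn indices at distinct leaves are chosen independently, each ranging over the binders above that leaf. The difference is in how far you push it. The paper's proof is a two-line contrapositive of one direction only — if some leaf has more than one binder above it, one can pick more than one index there, so the term is not unique — and it leaves the converse direction (exactly one binder everywhere implies unique closability) implicit as obvious. Your product formula $N(T)=\prod_{\ell} b(\ell)$, established by induction along the clauses of \texttt{closedTerm/2}, is a quantitative strengthening: it yields both directions of the equivalence in one stroke ($N(T)=1$ iff every factor equals $1$), and as a bonus it subsumes Proposition~\ref{closkel}, since $N(T)>0$ iff every $b(\ell)\ge 1$. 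What your route buys is a single uniform counting statement from which closability, unique closability, and indeed the exact number of closed terms per skeleton all fall out; what the paper's route buys is brevity, at the cost of only arguing the non-uniqueness direction explicitly. One small point of care, which you handle correctly but is worth stating: the case split in your converse matters, since $b(\ell)=0$ at some leaf gives $N(T)=0$, not $N(T)\ge 2$, so "not uniquely closable" splits into "unclosable" and "multiply closable" — your formula cleanly distinguishes the two.
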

\begin{proof}
Note that if more than one 
were available for any leaf {\tt v}, one could choose more then one de Bruijn index 
at the corresponding leaf {\tt v/1} of a lambda term, resulting in more
than one possible lambda terms having the given skeleton.
\end{proof}

The predicate {\tt uniquelyClosable1/2} derived from {\tt quickClosableSkel1/2} ensures that
for each leaf {\tt v/0}  exactly one lambda binder is available. 

\begin{code}
uniquelyClosable1(N,X):-uniquelyClosable1(X,0,N,0).

uniquelyClosable1(v,1)-->[].
uniquelyClosable1(l(A),V)-->l,{succ(V,NewV)},uniquelyClosable1(A,NewV).  
uniquelyClosable1(a(A,B),V)-->a,uniquelyClosable1(A,V),
  uniquelyClosable1(B,V).
\end{code}

As a skeleton is uniquely closable if on any path
from a leaf to the root there's exactly one {\tt l/1} constructor, we derive the
predicate {\tt uniquelyClosable2/2} that marks subtrees below a lambda
{\tt l1/1} constructor to ensure no other {l/1} constructor is used
in them.

\begin{code}
uniquelyClosable2(N,X):-uniquelyClosable2(X,hasNoLambda,N,0).

uniquelyClosable2(v,hasOneLambda)-->[].
uniquelyClosable2(l(A),hasNoLambda)-->l,
  uniquelyClosable2(A,hasOneLambda).
uniquelyClosable2(a(A,B),Has)-->a,uniquelyClosable2(A,Has),
  uniquelyClosable2(B,Has).
\end{code}

By specializing with respect to having or not having a lambda binder above,
we obtain {\tt uniquelyClosable/2} which mimics a context-free grammar
generating all uniquely closable skeletons of a given size.
\begin{code}
uniquelyClosable(N,X):-uniquelyClosable(X,N,0).

uniquelyClosable(l(A))-->l,closedAbove(A).
uniquelyClosable(a(A,B))-->a,uniquelyClosable(A),uniquelyClosable(B).

closedAbove(v)-->[].
closedAbove(a(A,B))-->a,closedAbove(A),closedAbove(B).
\end{code}

In fact, if one wants to only count the number of solutions,
the actual term (argument 1) can be omitted, resulting in
the even faster predicate {\tt uniquelyClosableCount/1}.

\begin{code} 
uniquelyClosableCount(N):-uniquelyClosableCount(N,0).

uniquelyClosableCount-->l,closedAboveCount.
uniquelyClosableCount-->a,uniquelyClosableCount,uniquelyClosableCount.

closedAboveCount-->[].
closedAboveCount-->a,closedAboveCount,closedAboveCount.
\end{code}
This sequence of program transformations results in code running an order of
magnitude faster, with all counts  up to size {\tt 30}, shown in Fig. \ref{c_uc},
obtained in less than a minute.
Fig. \ref{c_uc} shows the growths of the set of uniquely closable skeletons.
\BF
x,y
0,0
1,1
2,0
3,1
4,1
5,2
6,2
7,7
8,5
9,20
10,19
11,60
12,62
13,202
14,202
15,679
16,711
17,2304
18,2507
19,8046
20,8856
21,28434
22,31855
23,101288
24,115596
25,364710
26,421654
27,1323946
28,1549090
29,4836072
30,5724582
\end{filecontents*}
\LPIC{size}
{uniquely closable skeletons}
{Uniquely closable skeletons by increasing sizes}{c_uc}

If expressed as a Haskell data type, the grammar describing
the set of closable skeletons becomes:
\begin{codex}
data UniquelyClosable =  L ClosedAbove 
  | A UniquelyClosable UniquelyClosable deriving(Eq,Show,Read)

data ClosedAbove = V | B ClosedAbove ClosedAbove deriving(Eq,Show,Read)
\end{codex}
With this notation, a skeleton, with the constructor {\tt B} used for
binary trees not containing an {\tt L} constructor, is
{\tt A (A (L V) (L V)) (L (B (B V V) V))}.

One can transliterate the Prolog DCG grammar into Haskell
by using list comprehensions to mimic backtracking as follows.
\begin{codex}
genA 0 = []
genA n | n>0 = 
   [L x | x <- genB (n-1)] ++ 
   [A x y | k <- [0..n-2], x <- genA k,  y <- genA (n-2-k)] 
   
genB 0 = [V]
genB n | n>0 = [B a b | k <- [0..n-2], a <- genB k, b <- genB (n-2-k)]   
\end{codex}

By entering the equivalent of this data type definition as input to Maciej Bendkowski's Boltzmann sampler generator \cite{bend2017} we have obtained a Haskell program generating uniformly random terms of one hundred thousand nodes in a few seconds. The probability threshold for a unary constructor was below {\tt 0.5001253328728457}
and then, once having entered a closed above subtree, it was {\tt 0.5001253328728457} to stop at a leaf rather than continuing with a binary node.  See the {\bf Appendix} for the equivalent Prolog code.

Let us denote by $B(z)$ the ordinary generating function for binary trees. The series $B(z)$ follows the algebraic functional equation
$B=z+zM^2$ and consequently $B(z)={\frac {1-\sqrt {-4\,{z}^{2}+1}}{2z}}$.

The ordinary generating function $U(z)$ for uniquely closable lambda terms satisfies  $U(z)=zU(z)^2+zB(z)$. Indeed, a uniquely closable term has either an application at the root followed by two sub uniquely closable terms (which gives rise to $zC(z)^2$), either an abstraction at the root followed by a term with no abstraction (which gives rise to $zB(z)$). Consequently, $U(z)=\frac {1-\sqrt {2\,z\sqrt {-4\,{z}^{2}+1}-2\,z+1}}{2z}$. We are again in the framework of the Flajolet-Odlysko transfer theorems \cite{flajolet1990singularity} which gives directly the asymptotics of the number $u_n$ of uniquely closable terms: $u_n \sim \frac {{2}^{1/4+n}}{4\Gamma  \left( 3/4 \right) {n}^{5/4}}$.

We can follow the same approach that for $C(z)$ to calculate quickly the coefficients $u_n$. In particular, $U(z)$ satisfies the algebraic equation $z^2U(z)^4-2zU(z)^3+(z+1)U(z)^2-U(z)+z^2=0$. From which we deduce a linear differential equation:
\begin{eqnarray*}
&0=-128z^5-40z^4+52z^3+18z^2-6z+(16z^5+56z^4-20z^3-20z^2+8z-2)U(z)+\\
&(512z^8-512z^7-320z^6+96z^5+144z^4+16z^3-24z^2-6z+2)({\frac {\rm d}{{\rm d}z}}U \left( z \right))+\\
&(256z^9-128z^8-128z^7-32z^6+64z^5+24z^4-16z^3-2z^2
+z)({\frac {
{\rm d}^{2}}{{\rm d}{z}^{2}}}U \left( z \right))\\
\end{eqnarray*}
with the initial condition $U \left( 0 \right) =0.$

Thus, we can efficiently compute the coefficient $u_n$ using the P-recurrence:
\begin{eqnarray*}
 &\left( 256n^2+256n \right) u_n ~+
 &\left( -128n^2-640n-512 \right) u_{n+1} ~+\\
 &\left( -128n^2-704n-880 \right) u_{n+2} ~+
&\left( -32n^2-64n+152 \right) u_{n+3} ~+\\ 
&\left( 64n^2+592n+1324 \right) u_{n+4} ~+
&\left( 24n^2+232n+540 \right) u_{n+5} ~+\\
&\left( -16n^2-200n-616\right) u_{n+6} ~+\\
&\left( -2n^2-32n-128\right) u_{n+7} ~+ 
&\left( n^2+17n+72 \right) u_{n+8}=0\\
\end{eqnarray*}
with the initial conditions $u_0 = 0, u_1 = 0, u_2 = 1, u_3 = 0, u_4 = 1, u_5 = 1, u_6 = 2, u_7 = 2 $

The Taylor series expansion of $U(z)$ gives $z^2 + z^4 + z^5 + 2 z^6 + 2 z^7 + 7 z^8 + 5 z^9 + 20 z^{10} + 19 z^{11} + 60 z^{12} + 62 z^{13} + 202 z^{14} + 202 z^{15} ...$ with coefficients of $z$ matching the number of solutions of the predicate {\tt uniquelyClosable/2} for sizes given by the exponents of $z$.

Let us notice that the polynomial factor in the asymptotics is not in $n^{-3/2}$ as it is universal for the tree-like structure. Here we have an interesting polynomial factor in $n^{-5/4}$ which appears when two square-root singularities coalesce. 

\section{Typable and Untypable Closable Skeletons} \label{utnt}
We call a Motzkin skeleton {\em typable} if it exists at least one 
simply-typed closed lambda term having it as its skeleton.
An {\em untypable} skeleton is a closable skeleton for which no such term exists.

We will follow the interleaving of term generation, 
checking for closedness and type inference steps shown in \cite{ppdp15tarau},
but split it into a two stage program, with the first stage generating code
to be executed, via Prolog's metacall by the second, while also
ensuring that the terms  generated by the second stage 
are closed.
 
The predicate {\tt genSkelEqs/4} generates type unification equations,
that, if satisfied by a closed lambda term, ensure that the term is
simply-typable.
\begin{code}
genSkelEqs(N,X,T,Eqs):-genSkelEqs(X,T,[],Eqs,true,N,0).

genSkelEqs(v,V,Vs,(el(V,Vs),Es),Es)-->{Vs=[_|_]}.
genSkelEqs(l(A),(S->T),Vs,Es1,Es2)-->l,genSkelEqs(A,T,[S|Vs],Es1,Es2).  
genSkelEqs(a(A,B),T,Vs,Es1,Es3)-->a,genSkelEqs(A,(S->T),Vs,Es1,Es2),
  genSkelEqs(B,S,Vs,Es2,Es3).

el(V,Vs):-member(V0,Vs),unify_with_occurs_check(V0,V).
\end{code}
Note that each lambda binder adds a new type variable to
the list (starting empty at the root) on the way down to a leaf node.
A term is then closed if the list of those variables {\tt Vs} is not
empty at each leaf node.

Thus, to generate the typable terms, one simply {\em executes the equations} {\tt Eqs},
as shown by the predicate {\tt typableClosedTerm/2}.
\begin{code}
typableClosedTerm(N,Term):-genSkelEqs(N,Term,_,Eqs),Eqs.
\end{code}

The predicate {\tt typableSkel/2} generates skeletons that are
typable by running the same equations {\tt Eqs} and ensuring they have {\em at least
one solution} using the Prolog built-in {\tt once/1}.
The predicate {\tt untypableSkel/2} succeeds, when the negation
of these equations succeeds, indicating that no simply-typed
lambda term exists having the given skeleton. 
Clearly, this is much faster than naively generating all the closed lambda 
terms and then finding their distinct skeletons.
\begin{code}
typableSkel(N,Skel):-genSkelEqs(N,Skel,_,Eqs),once(Eqs).
untypableSkel(N,Skel):-genSkelEqs(N,Skel,_,Eqs),not(Eqs).
\end{code}
In Fig. \ref{types} we show 3 typable and 3 untypable Motzkin skeletons.
\begin{figure}
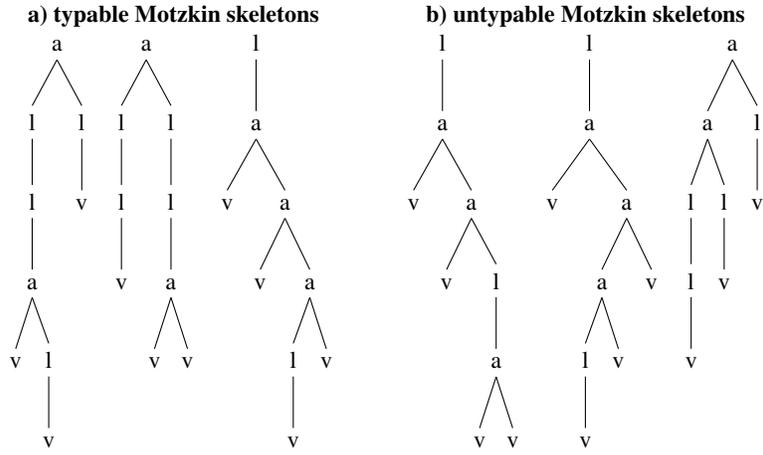

\begin{center}
{\bf a) typable Motzkin skeletons}~~~~~~~~~~~~~~~~~
{\bf b) untypable Motzkin skeletons}
\Tree [.a [.l [.l [.a [.v ] [.l [.v ]  ]  ]  ]  ] [.l [.v ]  ]  ]~
\Tree [.a [.l [.l [.v ]  ]  ] [.l [.l [.a [.v ] [.v ]  ]  ]  ]  ]~
\Tree [.l [.a [.v ] [.a [.v ] [.a [.l [.v ]  ] [.v ]  ]  ]  ]  ]~~~~~~~~~
\Tree [.l [.a [.v ] [.a [.v ] [.l [.a [.v ] [.v ]  ]  ]  ]  ]  ]~
\Tree [.l [.a [.v ] [.a [.a [.l [.v ]  ] [.v ]  ] [.v ]  ]  ]  ]~
\Tree [.a [.a [.l [.l [.v ]  ]  ] [.l [.v ]  ]  ] [.l [.v ]  ]  ]
\end{center}
\caption{Typable vs. untypable skeletons of size {\tt 8}}\label{types}
\end{figure}

An interesting question arises at this point about the relative density of closable and typable skeletons. Fig. \ref{c_vs_t}, shows how many typable skeletons are among the closable skeletons
for sizes up to {\tt 18}.
\BF
a,b,c
0,0,0
1,1,1
2,1,1
3,2,1
4,5,5
5,11,9
6,26,17
7,65,55
8,163,122
9,417,289
10,1086,828
11,2858,2037
12,7599,5239
13,20391,14578
14,55127,37942
15,150028,101307
16,410719,281041
17,1130245,755726
18,3124770,2062288
\end{filecontents*}
\LPICS{size}{Closable vs. typable skeletons}
{closable}{typable}
{Closable skeletons vs. typable skeletons by increasing sizes}{c_vs_t}
We leave as an {\em open problem} finding out the asymptotic behavior
of the relative density of the typable skeletons in the set of
closable ones.

\section{Uniquely Typable Skeletons and their Relation to Uniquely Closable Skeletons} \label{ucut}

A {\em uniquely typable skeleton} is one for which it exists exactly one
simply-typed closed lambda term having it as a skeleton.

\BF
x,y
0,0
1,1
2,0
3,0
4,2
5,0
6,1
7,7
8,1
9,13
10,34
11,20
12,100
13,226
14,234
15,853
16,1877
17,2650
18,8128
19,18116
20,30483
21,85713
\end{filecontents*}
\LPIC{size}
{uniquely typable skeletons}
{Uniquely typable skeletons by increasing sizes}{c_ut}

The predicate {\tt uniquelyTypableSkel/2} generates unification equations
for which, with the use of the built-in {\tt findnsols/4}, it ensures
efficiently that they have unique solutions. Fig. \ref{c_ut} shows the
counts of the skeletons it generates up to size {\tt 21}.
\begin{code}
uniquelyTypableSkel(N,Skel):-
  genSkelEqs(N,Skel,_,Eqs),has_unique_answer(Eqs).

has_unique_answer(G):-findnsols(2,G,G,Sols),!,Sols=[G].
\end{code}

The natural question arises at this point: are there (uniquely) typable skeletons
among the set of uniquely closable ones? The predicate {\tt uniquelyTypableSkel/2}
generates them by filtering the answer stream of  {\tt uniquelyClosable/2} with the
predicate {\tt isUniquelyTypableSkel/1}.
\begin{code}
uniquelyClosableTypable(N,X):-
  uniquelyClosable(N,X),isUniquelyTypableSkel(X).

isUniquelyTypableSkel(X):-skelType(X,_).
\end{code}
The predicate {isUniquelyTypableSkel/2} works by trying to infer the simple type
of a uniquely typable lambda term corresponding to the skeleton. Note that this
is a specialization of a general type inferencer to the case when exactly one
type variable is available for each leaf node.
\begin{code}
skelType(X,T):-skelType(X,T,[]).

skelType(v,V,[V0]):-unify_with_occurs_check(V,V0).
skelType(l(A),(S->T),Vs):-skelType(A,T,[S|Vs]). 
skelType(a(A,B),T,Vs):-skelType(A,(S->T),Vs),skelType(B,S,Vs).
\end{code}

\begin{prop}
Uniquely closable typable skeletons of size $3n+1$ are in bijection 
with Catalan objects (binary trees) of size $n$.
\end{prop}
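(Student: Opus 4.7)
The plan is to combine the context-free grammar from Section~\ref{uc} describing uniquely closable skeletons with a short occurs-check argument that forces a very rigid local shape. Recall that every uniquely closable skeleton $T$ is generated by $T \to l(B) \mid a(T,T)$ and $B \to v \mid a(B,B)$, so $T$ decomposes into a top layer made entirely of $a$-nodes whose leaves are subterms of the form $l(B)$, with each $B$ a pure binary tree of $v$-leaves all pointing to the single lambda directly above.

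The crucial step is to show that such a subterm $l(B)$ is typable if and only if $B = v$. In any simple typing, the bound lambda variable receives a fresh type variable $\alpha$, and unique closability forces every $v$-leaf of $B$ to refer to this lambda, hence to have type $\alpha$. If $B$ contains at least one $a$-node then, taking a deepest one, it must be of the form $a(v,v)$; typing this application requires $\alpha = \alpha \to \tau$ for some $\tau$, which fails the occurs check. Hence $B$ must be $v$, and $l(v)$ is trivially typable with principal type $\alpha \to \alpha$.

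Given this local characterization, a uniquely closable typable skeleton is exactly a full binary tree of $a$-nodes whose leaves are all copies of the cherry $l(v)$. Such a tree with $n$ internal $a$-nodes has $n+1$ leaves; under the paper's size convention ($a$ counts $2$, $l$ counts $1$, $v$ counts $0$) the total size is $2n + (n+1) = 3n+1$, matching the claim. Erasing the $l(v)$-cherries at the leaves then gives the obvious bijection with plain binary trees having $n$ internal nodes, i.e.\ the Catalan family of index $n$.

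Finally, the corresponding closed lambda term is a linear expression built solely from the identity combinator $I = \lambda x.x$, which admits a unique principal simple type obtained by routine unification of the fresh arrow types carried by each copy of $I$; this confirms both typability and that exactly one simply typed term realizes each such skeleton, so the bijection actually lands in the uniquely typable set as well. The one delicate point is the occurs-check argument for $l(B)$; once it is in hand, the size computation and the map to Catalan trees are immediate.
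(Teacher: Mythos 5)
Your proof is correct and takes essentially the same route as the paper's: the central step in both is the occurs-check argument (the $\omega$-style circular type) showing that a terminal \texttt{a(v,v)} under the unique binder cannot be typed, so every terminal subtree must be an \texttt{l(v)} cherry, after which erasing the cherries gives the bijection with binary trees of size $n$. You are in fact slightly more thorough than the paper, which leaves implicit both the converse (that every cherry-leaved applicative tree, being an applicative combination of identity combinators, is indeed typable) and the explicit $2n+(n+1)=3n+1$ size computation.
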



\begin{proof}
We will exhibit a simple bijection to binary trees.
We want to show that terminal subtrees must be of the form l(v(0)).
As there's a unique lambda above each leaf, closing it, the leaf
should be (in de Bruijn notation), v(0) pointing to the first and only lambda
above it.
Assume a terminal node of the form a(v(0),v(0)). Then the two leaves must share
a lambda binder resulting in a circular term when unifying their types (i.e.,
 as in the case of the well-known term $\omega$=l(a(v(0),v(0))))
 and thus it could not be typable. 
\end{proof}

The following two trees illustrate the shape of such a skeletons and their  bijection to binary trees.
Note that the skeleton is mapped into a binary tree simply by replacing its terminal subtrees of the form
\verb~l(v)~ with a leaf node \verb~v~.
\begin{center}
\Tree [.a [.a [.l [.v ]  ] [.l [.v ]  ]  ] [.l [.v ]  ]  ]~~~~~~~~~~~~~~~~~~~~~
\Tree [.a [.l [.v ]  ] [.a [.l [.v ]  ] [.l [.v ]  ]  ]  ]
\end{center}
{\em In this case, terms of size $3n+1=7=2+2+1+1+1+0+0+0$ are mapped to binary trees of size $n=2=1+1+0+0+0$ (with {\tt a/2 nodes} there counted as 1 and v/0 nodes as 0) after replacing {\tt l(v)} nodes with {\tt v} nodes}.

As a consequence, {\em each uniquely closable term that is
typable is uniquely typable}, as identity functions
of the form {\tt l(v(0))} would correspond to the
end of each path from the root to a leaf in a lambda term having
this skeleton. 
This tells us that there no ``interesting'' uniquely
closable terms that are typable. However, as there are 
normalizable terms that are not simply typed, an
interesting {\em open problem} is to find out if closable terms,
other than those ending with {\tt l(v(0))}
are (weakly) normalizable.

\section{Related work}\label{rels}

The classic reference for lambda calculus is \cite{bar84}.
Various instances of typed lambda calculi are
overviewed in \cite{bar93}.

The first paper where de Bruijn indices are used in counting lambda terms is \cite{lescanne13}, which also
uses a size definition equivalent to ours (but shifted by 1).
The idea of using Boltzmann samplers for lambda terms was first introduced in \cite{binlamb}.

The combinatorics and asymptotic behavior of various
classes of lambda terms are extensively studied in 
\cite{grygielGen,bodini13}.
However, the concepts of closable and typable skeletons
of lambda terms and their uniquely closable and typable
variants are new and have not been studied previously.
The second author has used extensively Prolog as a meta-language 
for the study of combinatorial and computational properties of lambda 
terms in papers like \cite{padl17,ppdp15tarau} covering different families
of terms and properties, but not in combination with the precise
analytic methods as developed in this paper. 

It has been a long tradition in logic programming to
use step-by-step program transformations to derive
semantically simpler as well as more efficient code, going back as far as
\cite{pettorossi1994transformation}, that we have informally followed.
In \cite{fior15} a general constraint logic programming
framework is defined for size-constrained
generation of data structures as well as
a program-transformation mechanism. By contrast, 
we have not needed to use constraint solvers
in our code as our derivation steps allowed us
to place constraints  explicitly at the exact program points where
they were needed, for both the case of closable and typable
skeletons. Keeping our programs close to Horn Clause
Prolog has helped deriving CF-grammars for closable
and uniquely closable skeletons and  has enabled the
use of tools from analytic combinatorics to fully understand
their asymptotic behavior.

\section{Conclusion} \label{concl}

We have used simple program transformations to derive more efficient or
conceptually simpler logic programs in the process of attempting to state,
empirically study and  solve  some interesting problems related
to the combinatorics of lambda terms.
Several open problems  have also been generated in the process
with interesting implications on better understanding structural properties
of the notoriously hard set of simply-typed lambda terms.

The lambda term skeletons introduced in the paper involve abstraction mechanisms
that ``forget'' properties of the difficult class of simply-typed
closed lambda terms to reveal  classes of terms that
are  easier to grasp with analytic tools. 
In the case of the combinatorially simpler set of
closed lambda  terms, we have found that interesting
subclasses of their skeletons turn out to
be easier to handle. 
The case of uniquely closable terms turned out to be covered by a context free
grammar, after several program transformation steps, and thus amenable to
study analytically.

The focus on uniquely closable and uniquely typable Motzkin-tree skeletons of
lambda terms, as well as their relations, has shown that closability
and typability are properties that {\em predetermine} which lambda
terms in de Bruijn notation can have such Motzkin trees as skeletons.
Our analytic and experimental study has shown exponential growth for each of
these families and suggests possible uses as positive or negative lemmas
for all-term and random lambda term generation in dynamic programming algorithms.

Last but not least, we have shown that a language as simple
as side-effect-free Prolog, with limited use of impure features
and meta-programming, can  handle elegantly complex combinatorial 
generation problems, when the synergy between sound unification, 
backtracking and DCGs is put at work.

\section*{Acknowledgement} 
This research has been supported by NSF grant \verb~1423324~. 
We thank to Maciej Bendkowski for salient comments on an earlier draft of this paper
and the participants of the 10th Workshop on
Computational Logic and Applications 
( \url{https:/cla.tcs.uj.edu.pl} )
for enlightening discussions on the
combinatorics of lambda terms and their applications.

\bibliographystyle{../INCLUDES/splncs}
\bibliography{../INCLUDES/theory,../go/tarau,../INCLUDES/proglang}

\newpage
\section*{Appendix}

\subsection*{A Boltzmann sampler for closable lambda term skeletons}

\begin{code}
genRanCL(MinN,MaxN):- 
  time(genRanCL(100000,MinN,MaxN,R,I)),
  writeln(R), 
  writeln(I),
  fail
; true.

genRanCL(Tries,MinN,MaxN,X,I):-
  between(1,Tries,I),
  tryRanCL(MinN,MaxN,X),
  !.

tryRanCL(MinN,MaxN,X):-
  random(R),
  ranCL(R,MaxN,X,MaxN,Dif),
  MaxN-Dif>=MinN.

below(R,P,MaxN,N,N):-R=<P,N<MaxN.

ranCL(R,MaxN,l(Z))-->below(R,0.8730398709632761,MaxN),!,
  l,
  {random(R1)},
  ranMot(R1,MaxN,Z).
ranCL(_,MaxN,a(X,Y))-->a,
  {random(R1),random(R2)},
  ranCL(R1,MaxN,X),
  ranCL(R2,MaxN,Y).

ranMot(R,MaxN,v)-->below(R,0.3341408333975344,MaxN),!.
ranMot(R,MaxN,l(Z))-->below(R,0.667473848839429,MaxN),!,
  l,
  {random(R1)},
  ranMot(R1,MaxN,Z).
ranMot(_,MaxN,a(X,Y))-->a,
  {random(R1),random(R2)},
  ranMot(R1,MaxN,X),
  ranMot(R2,MaxN,Y).
\end{code}

\subsection*{A Boltzmann sampler for uniquely closable lambda term skeletons}

\begin{code}
genRanUC(MinN,MaxN):- 
  time(genRanUC(100000,MinN,MaxN,R,I)),
  writeln(R), 
  writeln(tries=I),
  fail
; 
  true.

genRanUC(Tries,MinN,MaxN,X,I):-
  between(1,Tries,I),
  tryRanUC(MinN,MaxN,X),
  !.

tryRanUC(MinN,MaxN,X):-
  random(R),
  ranUC(R,MaxN,X,MaxN,Dif),
  MaxN-Dif>=MinN.

ranUC(R,MaxN,l(A))-->below(R,0.5001253328728457,MaxN),!,
  l,
  {random(R1)},
  ranCA(R1,MaxN,A).
ranUC(_,MaxN,a(A,B))-->a,
  {random(R1),random(R2)},
  ranUC(R1,MaxN,A),
  ranUC(R2,MaxN,B).

ranCA(R,MaxN,v)-->below(R,0.5001253328728457,MaxN),!.
ranCA(_,MaxN,a(A,B))-->
  a,
  {random(R1),random(R2)},
  ranCA(R1,MaxN,A),
  ranCA(R2,MaxN,B).
  
sampler_test1:-
  genRanCL(100000,200000).

sampler_test2:-
  genRanUC(100000,200000).
\end{code}

\subsection*{Some counts for the combinatorial objects of given sizes discussed in the paper, as
generated by code at \url{http://www.cse.unt.edu/~tarau/research/2017/uct.pro}
}

\BI
\I Motzkin Numbers: 1,1,2,4,9,21,51,127,323,835,2188,5798,15511, {\bf A001006} in \cite{intseq}
\I Closed Terms:  0,1,2,4,13,42,139,506,1915,7558,31092 $\ldots$ {\bf A135501} in \cite{intseq}
\I Closable Skeletons: 0,1,1,2,5,11,26,65,163,417,1086,2858,7599,20391,55127,\\ 
   150028,410719,1130245,3124770,8675210,24175809 $\ldots$
   \I Uniquely Closable Skeletons: 0,1,0,1,1,2,2,7,5,20,19,60,62,202,202,679,711,2304,\\
   2507,8046,8856,28434,31855,101288,115596,364710,421654,1323946,1549090 $\ldots$
\I Typable Closed Terms: 0,1,2,3,10,34,98,339,1263,4626,18099,73782,306295,\\
   1319660,5844714,26481404,123172740 $\ldots$
\I Typable Closable Skeletons:  0,1,1,1,5,9,17,55,122,289,828,2037,5239,14578,
   37942,\\101307,281041,755726,2062288 $\ldots$
\I Untypable Closable Skeletons: 0,0,0,1,0,2,9,10,41,128,258,821,2360,5813,\\17185,48721,129678,374519
\I Uniquely Typable  Skeletons: 0,1,0,0,2,0,1,7,1,13,34,20,100,226,234,\\
853,1877,2650,8128,18116,30483,85713 $\ldots$
\I Uniquely Closable Typable Skeletons: 0,1,0,0,1,0,0,2,0,0,5,0,0,14,0,0,42,0,0,132,\\
0,0,429,0,0,1430 $\ldots$
{\bf A000108}, {\em Catalan numbers} in \cite{intseq} for counts in position of the form $3n+1$.
\EI

\begin{codeh}

cm(N):-ncounts(N,motSkel(_,_)). 
bm(N):-ntimes(N,motSkel(_,_)).
sm(N):-nshow(N,motSkel(_,_)). 
pm(N):-npp(N,motSkel(_,_)). 
qm(N):-qpp(N,motSkel(_,_)).

cc(N):-ncounts(N,closedTerm(_,_)). 
bc(N):-ntimes(N,closedTerm(_,_)).
sc(N):-nshow(N,closedTerm(_,_)). 
pc(N):-npp(N,closedTerm(_,_)). 
qc(N):-qpp(N,closedTerm(_,_)).

ccs(N):-ncounts(N,closableSkel(_,_)). 
bcs(N):-ntimes(N,closableSkel(_,_)).
scs(N):-nshow(N,closableSkel(_,_)). 
pcs(N):-npp(N,closableSkel(_,_)). 
qcs(N):-qpp(N,closableSkel(_,_)).

ccs1(N):-ncounts(N,closable(_,_)). 
bcs1(N):-ntimes(N,closable(_,_)).
scs1(N):-nshow(N,closable(_,_)). 
pcs1(N):-npp(N,closable(_,_)). 
qcs1(N):-qpp(N,closable(_,_)).

cncs(N):-ncounts(N,unClosableSkel(_,_)). 
bncs(N):-ntimes(N,unClosableSkel(_,_)).
sncs(N):-nshow(N,unClosableSkel(_,_)). 
pncs(N):-npp(N,unClosableSkel(_,_)). 
qncs(N):-qpp(N,unClosableSkel(_,_)).

cqs(N):-ncounts(N,quickClosableSkel(_,_)). 
bqs(N):-ntimes(N,quickClosableSkel(_,_)).
sqs(N):-nshow(N,quickClosableSkel(_,_)). 
pqs(N):-npp(N,quickClosableSkel(_,_)). 
qqs(N):-qpp(N,quickClosableSkel(_,_)).

buc1(N):-ntimes(N,uniquelyClosable1(_,_)).
buc2(N):-ntimes(N,uniquelyClosable1(_,_)).

cuc(N):-ncounts(N,uniquelyClosable(_,_)). 
buc(N):-ntimes(N,uniquelyClosable(_,_)).
suc(N):-nshow(N,uniquelyClosable(_,_)). 
puc(N):-npp(N,uniquelyClosable(_,_)). 
quc(N):-qpp(N,uniquelyClosable(_,_)).

ct(N):-ncounts(N,typableClosedTerm(_,_)). 
bt(N):-ntimes(N,typableClosedTerm(_,_)).
st(N):-nshow(N,typableClosedTerm(_,_)). 
pt(N):-npp(N,typableClosedTerm(_,_)). 
qt(N):-qpp(N,typableClosedTerm(_,_)). 

cts(N):-ncounts(N,typableSkel(_,_)). 
bts(N):-ntimes(N,typableSkel(_,_)).
sts(N):-nshow(N,typableSkel(_,_)). 
pts(N):-npp(N,typableSkel(_,_)). 
qts(N):-qpp(N,typableSkel(_,_)). 

cuts(N):-ncounts(N,untypableSkel(_,_)).
buts(N):-ntimes(N,untypableSkel(_,_)).
suts(N):-nshow(N,untypableSkel(_,_)).  
puts(N):-npp(N,untypableSkel(_,_)).
quts(N):-qpp(N,untypableSkel(_,_)).

cus(N):-ncounts(N,uniquelyTypableSkel(_,_)). 
bus(N):-ntimes(N,uniquelyTypableSkel(_,_)).
sus(N):-nshow(N,uniquelyTypableSkel(_,_)). 
pus(N):-npp(N,uniquelyTypableSkel(_,_)). 
qus(N):-qpp(N,uniquelyTypableSkel(_,_)).

cucut(N):-ncounts(N,uniquelyClosableTypable(_,_)). 
bucut(N):-ntimes(N,uniquelyClosableTypable(_,_)).
sucut(N):-nshow(N,uniquelyClosableTypable(_,_)). 
pucut(N):-npp(N,uniquelyClosableTypable(_,_)). 
qucut(N):-qpp(N,uniquelyClosableTypable(_,_)).

go:-N=13,
  Tests=[cm,cc,ccs,cncs,cqs,cuc,ct,cts,cuts,cus,cucut],
  forall(member(F,Tests),call(F,N)).

fig1c:-fig1c(18).
fig1c(N):-time(ncvs2(N,quickClosableSkel(_,_),unClosableSkel(_,_))). 

fig2c:-fig2c(28).
fig2c(N):-time(ncvs(N,uniquelyClosable(_,_))).

fig1t:-fig1t(16).
fig1t(N):-time(ncvs2(N,typableSkel(_,_),untypableSkel(_,_))). 

fig2t:-fig2t(16).
fig2t(N):-time(ncvs(N,uniquelyTypableSkel(_,_))).

figct:-figct(16).
figct(N):-time(ncvs2(N,closableSkel(_,_),typableSkel(_,_))).

\end{codeh}

\end{document}